\documentclass[11pt]{article}
\usepackage[a4paper,margin=1.0in]{geometry}

\usepackage{thm-restate}									% BUG: loading thm-restate/thmtools
\usepackage{amsthm}											% before amsthm creates a warning
\usepackage{amssymb}										% but the other order creates an error
\usepackage{amsmath}
\usepackage{url}
\usepackage{enumerate}
\usepackage[dvipsnames,svgnames,table]{xcolor}
\usepackage[algoruled,vlined,linesnumbered]{algorithm2e}
\usepackage{authblk}										% for nicer affiliation blocks
\usepackage{setspace}										
\usepackage[colorlinks=true, urlcolor=blue, linkcolor=blue, citecolor=ForestGreen]{hyperref}
\usepackage[noabbrev, nameinlink]{cleveref}					% Must be included after ›hyperref‹.
\usepackage{tikz}
\usetikzlibrary{arrows.meta, positioning}
\usepackage{pgfplots}
\pgfplotsset{compat=1.18}

\newcommand{\ignore}[1]{{}}

\SetNlSty{bfseries}{\color{black}}{}

\newtheorem{theorem}{Theorem}
\newtheorem{definition}[theorem]{Definition}
\newtheorem{lemma}[theorem]{Lemma}

\newcommand*{\Otilde}{\widetilde{O}}
\newcommand*{\G}{\mathcal{G}}

\newcommand*{\nwspace}{\hspace*{.1em}} % MS: a space for math environments which is smaller than '\,'

\newcommand*{\poly}{\textsf{poly}}

\let\oldsqrt\sqrt
\def\hksqrt{\mathpalette\DHLhksqrt}
\def\DHLhksqrt#1#2{\setbox0=\hbox{$#1\oldsqrt{#2\,}$}\dimen0=\ht0
   \advance\dimen0-0.2\ht0
   \setbox2=\hbox{\vrule height\ht0 depth -\dimen0}%
   {\box0\lower0.4pt\box2}}
\renewcommand\sqrt\hksqrt

\renewcommand{\leq}{\leqslant}

\renewcommand{\le}{\leqslant}
\renewcommand{\ge}{\geqslant}

\title{Near-Optimal Replacement Path Coverings}

\author[1]{Davide Bilò}
\author[2]{Keerti Choudhary}
\author[3]{Sarel Cohen}
\author[4]{Martin Schirneck}

\affil[1]{Department of Information Engineering, Computer Science

	and Mathematics, 	
	University of L'Aquila

	\texttt{davide.bilo@univaq.it}
	\vspace*{.5em}
}
\affil[2]{Department of Computer Science and Engineering, 

	Indian Institute of Technology Delhi,

	\texttt{keerti@iitd.ac.in}
	\vspace*{.5em}
}
\affil[3]{Efi Arazi School of Computer Science, 

	Reichman University

	\texttt{sarel.cohen@runi.ac.il}
	\vspace*{.5em}
}
\affil[4]{Department of Informatics, 

	Karlsruhe Institute of Technology

	\texttt{martin.schirneck@kit.edu}
}

\date{}

\begin{document}
\maketitle

\begin{abstract}
\noindent
	Let $L$ and $f$ be positive integers.
	An $(L,f)$\emph{-replacement path covering} (RPC) for a graph $G$ is a family 
	$\mathcal{G}$ of subgraphs such that,
	for every set $F$ of at most $f$ edges,
	there is a subfamily $\mathcal{G}_F \,{\subseteq}\, \mathcal{G}$ with the following properties.
	\begin{enumerate}
		\item No subgraph in $\mathcal{G}_F$ contains an edge of $F$.
		\item For each pair of vertices $s,t$ that have a shortest path in $G{-}F$ 
		with at most $L$ edges,\\ one such path also exists in some subgraph in $\mathcal{G}_F$.
	\end{enumerate}
	The  total number $|\mathcal{G}|$ of subgraphs is called the \emph{covering value}.
	
	RPCs are important tools in the design of fault-tolerant data structures.
	Weimann and Yuster [TALG 2013] presented an RPC with covering value
	$\Otilde(f L^f)$.
	Karthik and Parter [TALG 2024] showed that $\Omega( (L/f)^f )$ subgraphs are necessary.
	Recently, Bilò, Chechik, Choudhary, Cohen, and Schirneck [ICALP 2026]
	devised a new approach for very small sensitivities $f = o(\log L)$ with covering value $\Otilde(f e^f (L/f)^{f+o(1)})$.
	They also showed that any RPC in the complementary range $f = \Omega(\log L)$
	must contain $\Omega( (\sqrt{f e^f}/L) \cdot (L/f)^f)$ subgraphs.
	The asymptotically optimal covering value has remained open.
	
	We present two surprisingly simple constructions, improving both previous upper and lower bounds. Our upper bound follows from a single observation: sampling each edge with the optimal probability $f/(L+f)$, rather than the classical $1/L$, already yields a nearly optimal covering value. Together with a matching lower bound, this establishes a near-tight covering value of 
	$\widetilde{\Theta}\big(\frac{(L+f)^{L+f}}{L^L  f^f} \big) \cdot \poly(f)$ for the much wider range of $f = O(L)$.
\end{abstract}

% plain LaTeX abstract
\ignore{%
	Let $L$ and $f$ be positive integers. An $(L,f)$-replacement path covering (RPC) for a graph $G$ is a family $\mathcal{G}$ of subgraphs such that, for every set $F$ of at most $f$ edges,
there is a subfamily $\mathcal{G}_F \subseteq \mathcal{G}$ with the following properties. (1) No subgraph in $\mathcal{G}_F$ contains an edge of $F$. (2) For each pair of vertices $s,t$ that have a shortest path in $G{-}F$	with at most $L$ edges, one such path also exists in some subgraph in $\mathcal{G}_F$. The  total number $|\mathcal{G}|$ of subgraphs is called the covering value.
	
RPCs are an important tools in the design of fault-tolerant data structures. Weimann and Yuster [TALG 2013] presented an RPC with covering value $\widetilde{O}(f L^f)$. Karthik and Parter [TALG 2024] showed that $\Omega( (L/f)^f )$ subgraphs are necessary. Recently, Bilò, Chechik, Choudhary, Cohen, and Schirneck [ICALP 2026] devised a new approach for very small sensitivities $f = o(\log L)$ with covering value $\widetilde{O}(f e^f (L/f)^{f+o(1)})$. They also showed that any RPC in the complementary range $f = \Omega(\log L)$ must contain $\Omega( (\sqrt{f e^f}/L) \cdot (L/f)^f)$ subgraphs. This left open the question of what is the true covering value.
	
We give two surprisingly simple constructions that improve both the upper and lower bound.	This results in a near-tight covering value of $\widetilde{\Theta}(\frac{(L+f)^{L+f}}{L^L  f^f}) \cdot \mathsf{poly}(f)$ for the much wider range of $f = O(L)$.
}

\section{Introduction}
\label{sec:intro}

Fault-tolerant graph data structures are able to quickly report properties
of the underlying network, like pairwise distances or connectivity,
even when the input undergoes a bounded number of transient edge failures.
Research in fault tolerance (a.k.a.\ sensitivity analysis) bridges the gap between the extensive body of foundational work on static graph algorithms,
where the input remains fixed during the whole computation,
and the requirements of real-world scenarios,
where the network is constantly changing.
In the last two decades, fault-tolerant data structures have been developed, for example, for connectivity problems~\cite{BrSa19, DuanP09a, DuanP10, DuanP:17, Petruschka26ConnectivityLabelingFaultyColoredGraphs, PatrascuT:07},
flows and cuts~\cite{Ahi26MaxFlowMinCutSensitivityOracles,Baswana23stCutsSensitivityOracle},
shortest paths~\cite{ChCo20, ChCoFiKa17,  ChoS024, DeThChRa08, DeyGupta24, DuanP09a, DuRe22, GuRen21,GW12}, diameter and eccentricity~\cite{Bilo22Extremal,Bilo23STDiameter,HenzingerL0W17}, or routing~\cite{CLPR12}.
They have also found applications for \textsf{NP}-hard problems like vertex cover, $k$-path, and $k$-clique~\cite{AlmanHirsch22ExteriorAlgebras,Bilo22FixedParameterSensitivityOracles}.

The prototypical fault-tolerant data structure is a \emph{distance sensitivity oracle} (DSO).
It preprocesses a given (potentially directed or weighted) graph $G = (V,E)$ and a positive integer $f$,
called the \emph{sensitivity}.
Every query for the oracle is a triplet $(s,t,F)$ consisting of two vertices $s,t \in V$  and a set $F \subseteq E$ of at most $f$ edges.
The output of the oracle is the \emph{replacement distance} $d_{G{-}F}(s,t)$,
the length of a shortest path from $s$ to $t$ in $G{-}F$.
Any such shortest path is called a \emph{replacement path}.
When designing such a DSO, a typical approach is to set some cut-off value $L$ 
(determined by the particular approach)
and treat replacement paths with at most $L$ edges
separately, see e.g.\ \cite{Bilo24ApproxDSOSubquadraticTheoretiCS,BrSa19,GW12,WY13}.
We call this the \emph{hop-short case}.
For the complete oracle, one still needs a separate mechanism to combine the hop-short paths
to an answer for the general case.
We only consider the hop-short case in this work.

The main tool for solving this problem is \emph{replacement path coverings} (RPCs)\footnote{%
	The name was introduced in the work by 
	Karthik and Parter~\cite{KarthikParter24DeterministicRPC_TALG}.
}
as introduced by Weimann and Yuster~\cite{WY13}.
Their preprocessing algorithm generates a family $\G$ of subgraphs of $G$.
It starts with $|\G| = \Otilde(f L^f)$ copies\footnote{
	The $\Otilde$-notation hides a $\log(n)$-factor.
} 
of the original graph and then,
in each one of them, removes any edge independently with probability $1/L$.
The authors of \cite{WY13} show that, w.h.p.\footnote{%
	By \emph{high probability} we mean $1-n^{-c}$ for some constant $c > 0$.
	In all cases mentioned here, $c$ can be chosen arbitrarily large without 
	affecting the asymptotics.
}\
for all queries $(s,t,F)$
such that $s$ and $t$ are joined in $G{-}F$ by some shortest path on at most $L$ edges,
there exists a subgraph $G^* \in \G$ that does not have any edge of $F$
but at least one of those shortest paths survives in $G^*$.
This is useful for DSOs since the properties of an RPC imply that the $s$-$t$-distance in $G^*$
is precisely the replacement distance $d_{G{-}F}(s,t)$.
%RPCs later found applications in fault tolerance also beyond distances~\cite{Bilo22FixedParameterSensitivityOracles}.

The remaining question is how to identify such a subgraph $G^*$.
The approach in~\cite{WY13} is to scan the whole family $\G$ and filter for those subgraphs
that do not have any edge of $F$.
This takes time $O(f)$ per subgraph and
results in a subfamily $\G_F \subseteq \G$.
Then, $\min_{H \in \G_F} d_{H}(s,t)$ is computed to obtain the replacement distance.
Subsequent works~\cite{Bilo24ApproxDSOSubquadraticTheoretiCS,BiChChCoFrScFOCS24,KarthikParter24DeterministicRPC_TALG} have also focused more on
providing a small set $\G_F$ of suitable subgraphs for a given failure set $F$
instead of finding exactly $G^*$.
This leads to the following definition.

\begin{definition}[replacement path coverings]
\label{def:RPC}
	Let $L$ and $f$ be positive integers and $G = (V, E)$ a graph.
	An $(L,f)$\emph{-replacement path covering} for $G$ is a family $\G$ of spanning subgraphs of $G$
	that has a subfamily $\G_F \,{\subseteq}\, \G$ for every set $F \,{\subseteq}\, E$ of $|F| \,{\leqslant}\, f$ edges
	such that the following two properties hold.
	%\vspace*{.25em}
 	\begin{enumerate}
 		\item No subgraph in $\G_F$ contains an edge of $F$.
 	%	\vspace*{.25em}
 		\item For all vertices $s,t \in V$ for which there exists a shortest path from $s$ to $t$ in $G{-}F$  
 			with at most $L$ edges, 
 			at least one subgraph in $\G_F$ also has such a path.
 	\end{enumerate}
\end{definition}

As usual with data structures, there are different parameters to optimize and 
there is not a single solution that is best in all categories.
Possible optimization criteria are the time to preprocess $\G$,
the time to compute $\G_F$ given a query $(s,t,F)$,
the total number of subgraphs $|\G|$, or the size of the subfamily $\G_F$.
We are interested in minimizing the cardinality $|\G|$,
known as the \emph{covering value}~\cite{KarthikParter24DeterministicRPC_TALG}.
In fault-tolerant data structures, 
the covering value translates to the space required by the oracle.
The $(L,f)$-replacement path covering in the original paper by Weimann and Yuster~\cite{WY13} has covering value $\Otilde(f L^f)$.
Karthik and Parter~\cite{KarthikParter24DeterministicRPC_TALG} derandomized
the construction using heavy algebraic machinery including error-correcting codes.
This, however, increased the covering value to $O((Cf L \log n)^{f+1})$ 
for some constant $C > 0$.

They also showed that $\Omega((L/f)^f)$ subgraphs are necessary.
Bilò, Chechik, Choudhary, Cohen, Friedrich, and Schirneck~\cite{Bilo25IndexingSubnetworks,Bilo26SimplerRPCs}
subsequently improved the upper and lower bounds for certain parameter ranges.
For very small sensitivities $f = o(\log L)$,
they gave a randomized construction with covering value $\Otilde(f e^f  \cdot (L/f)^{f+o(1)})$
via hierarchical sampling.
This was the first improvement over~\cite{WY13}.
They also showed a lower bound of $\Omega( (\sqrt{f e^f}/L) \cdot (L/f)^f)$,
which improves over the result in~\cite{KarthikParter24DeterministicRPC_TALG}
in the complementary range of $f = \Omega(\log L)$.
An overview of previous works can be found in \Cref{table:upper,table:lower}.

\begin{table}[t]
\caption{
Upper bounds on the covering value of $(L,f)$-replacement path coverings.
}
%\vspace*{.5em}
\centering
\setlength{\tabcolsep}{5pt}
\renewcommand{\arraystretch}{1.25}
\begin{tabular}{@{}ccc@{}}

\textbf{Covering Value} &  \textbf{Range} & \textbf{Reference} \\
\noalign{\hrule height 1pt}\\[-10pt]

$\Otilde(fL^f)$ &  no restrictions & \cite{WY13}\\[.25em]

$\Otilde\big((fL)^{f+1}\big)$ &  & \cite{KarthikParter24DeterministicRPC_TALG}\\[.25em]

\noalign{\hrule height .5pt}\\[-14pt]

$\Otilde\big(f \cdot L^{f + o(1)}\big)$ & $f = o(\log L)$ & \cite{Bilo25IndexingSubnetworks}\\[.25em]

$\Otilde\left(f \cdot  e^f \!\left(\frac{L}{f} \right)^{f+o(1)}\right)$ & & \cite{Bilo26SimplerRPCs}\\[.75em]

\noalign{\hrule height .5pt}\\[-14pt]

$\Otilde\left(f \cdot \frac{(L+f)^{L+f}}{f^f L^L}\right)$ & no restrictions & \Cref{thm:upper}
 
\end{tabular}
\label{table:upper}
\end{table}

While these results show that the previously known bounds are slack,
they leave the question of what is the \emph{true} covering value of an $(L,f)$-replacement path covering.
In this work, we show that it is $\widetilde{\Theta}\big(\frac{(L+f)^{L+f}}{L^L  f^f} \big) \cdot \poly(f)$,
even for large sensitivities $f\leq L$.
Our new upper bound, as well as the lower bound stem from simple constructions
and their proofs use only elementary means.
We exploit the sampling argument used in Lemma 10 of Bilò, Casel, Choudhary, Cohen, Friedrich, Lagodzinski, and Schirneck~\cite{Bilo22FixedParameterSensitivityOracles} in a different context. We observe that applying it to replacement path coverings, together with the optimal sampling probability $f/(L+f)$ instead of $1/L$ as in the construction of Weimann and Yuster~\cite{WY13}, yields the following theorem.
%The proof of Lemma 10 in Bilò, Casel, Choudhary, Cohen, Friedrich, Lagodzinski, and Schirneck~\cite{Bilo22FixedParameterSensitivityOracles} already contains the underlying sampling argument in the context of the $k$-path problem. Applying the same argument to replacement path coverings, while sampling each edge independently with probability $f/(f+L)$ instead of $1/L$ as in the construction of Weimann and Yuster~\cite{WY13}, immediately yields the following theorem.

%It is implicit in the work of Bilò, Casel, Choudhary, Cohen, Friedrich, Lagodzinski, and Schirneck~\cite{Bilo22FixedParameterSensitivityOracles} that the number of subgraphs can be decreased by taking the same approach as Weimann and Yuster~\cite{WY13}, but removing any edge in the copies with probability $f/(f+L)$ instead of just $1/L$. This immediately yields the following theorem.

\begin{restatable}{theorem}{upperbound}
\label{thm:upper}
Let $f, L$ be positive integers.
There exists a randomized algorithm that w.h.p.\ constructs an
$(L,f)$-replacement path covering with covering value 
$O\big(f \log(n) \cdot \frac{(L+f)^{L+f}}{f^f L^L} \big)$.
\end{restatable}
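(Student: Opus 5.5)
We will mirror the construction of Weimann and Yuster~\cite{WY13} and change only the deletion probability. Set $p = f/(L+f)$ and $N = \lceil c \cdot f \ln(n) / q \rceil$ for a sufficiently large constant $c$, where
\[
q = (1-p)^L p^f = \frac{L^L f^f}{(L+f)^{L+f}}.
\]
Draw $N$ independent subgraphs $G_1,\dots,G_N$. In each one, every edge of $G$ is deleted independently with probability $p$. For a fault set $F$ we take $\G_F = \{G_i : G_i \cap F = \emptyset\}$, so Property~1 of \Cref{def:RPC} holds by definition. The covering value is $N = O\bigl(f\log(n)\cdot \frac{(L+f)^{L+f}}{f^f L^L}\bigr)$, as claimed.

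For Property~2, fix a query $(s,t,F)$ with $|F|\le f$ such that $G{-}F$ has a shortest $s$-$t$ path with at most $L$ edges. Fix one such path $P$. Call $G_i$ \emph{good} for $(s,t,F)$ if every edge of $F$ is deleted in $G_i$ and every edge of $P$ survives.
\begin{itemize}
\item A good $G_i$ lies in $\G_F$.
\item It contains $P$, and since $G_i \subseteq G{-}F$, the path $P$ is a shortest path in $G_i$ as well.
\item $P$ and $F$ are edge-disjoint, so the two events are independent.
\end{itemize}
Hence $\Pr[G_i \text{ good}] = p^{|F|}(1-p)^{|P|} \ge p^f(1-p)^L = q$. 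The probability that no $G_i$ is good is therefore at most $(1-q)^N \le e^{-qN} \le n^{-cf}$.

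Next we take a union bound over queries. There are at most $n^2$ pairs $(s,t)$. The number of fault sets is $\sum_{j\le f}\binom{m}{j} \le (m+1)^f \le n^{2f+O(1)}$, where we may assume $m \ge 1$ and $n\ge 2$. So there are $n^{O(f)}$ triples in total, and choosing $c$ large makes the overall failure probability $n^{-\Omega(1)}$ with any desired constant. This union bound is the reason for the factor $f\log n$.

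The only nontrivial point, and the main obstacle, is the closed form of $q$. The function $g(p) = p^f(1-p)^L$ is maximized where $f/p = L/(1-p)$, which gives $p = f/(L+f)$ as the optimal probability and $q = (f/(L+f))^f (L/(L+f))^L$. Substituting yields exactly $1/q = \frac{(L+f)^{L+f}}{f^f L^L}$. A minor degenerate case is a graph with fewer than $f$ edges, which is handled by the $|F|\le f$ bound above without change. If $m$ were superpolynomial (multigraphs), one would replace $\log n$ by $\log m$; for simple graphs $m\le n^2$ and this is not needed.
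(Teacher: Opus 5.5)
Your proposal is correct and matches the paper's proof step for step. Both use the same deletion probability $p = f/(L+f)$, the same per-triple coverage bound $q = f^f L^L/(L+f)^{L+f}$, and the same union bound over $n^{O(f)}$ triples $(s,t,F)$. The only difference is cosmetic: you bound the failure probability directly by $(1-q)^N \le e^{-qN}$, where the paper cites a Chernoff bound, and your version is the more natural tool since a single good sample suffices.
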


\noindent
The $O(f \log n)$-factor ensures a high success probability.
If it is enough that the algorithm succeeds in expectation, then $O\big(\frac{(L+f)^{L+f}}{f^f L^L} \big)$ subgraphs suffice.

We complement the upper bound with a nearly matching lower bound based on a simple layered graph construction for which every $(L,f)$-replacement path covering is large. 
In particular, we obtain the following result.

\begin{restatable}{theorem}{lowerbound}
\label{thm:lower}
Let $f, L$ be positive integers such that $L \ge 2$ and $f = O(L)$.
There exists a weighted directed graph for which any $(L,f)$-RPC
must have covering value $\Omega\big(\frac{1}{\sqrt f}  \cdot \frac{(L+f)^{L+f}}{L^L  f^f} \big)$.
\end{restatable}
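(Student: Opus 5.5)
We prove \Cref{thm:lower} by exhibiting a layered weighted directed graph in which many pairs $(s,t,F)$ each force a ``private'' requirement that a single subgraph can satisfy only for few of them. This yields a counting (hitting-set / fractional) lower bound.

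\textbf{Construction.} Take vertices $v_0,\dots,v_{L+f}$ in a chain. Between consecutive vertices $v_{i-1},v_i$ place two parallel routes. The first is a ``light'' edge $e_i$ of weight $1$. The second is a ``heavy'' detour $v_{i-1}\to w_i\to v_i$ with two edges, each of some large weight $W$; multiplicity can be emulated with a subdivision vertex, since the graph is directed and weighted. Fix $s=v_0$ and $t=v_{L+f}$. For every subset $S\subseteq\{1,\dots,L+f\}$ with $|S|=f$, let $F_S=\{e_i : i\in S\}$. The hop count must come out right, so we adjust: rather than making detours add hops, we shorten the chain so that paths have exactly $L$ edges in total. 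Concretely, use $N=L+f$ positions, where position $i$ offers either edge $e_i$ or a ``skip''. A cleaner choice is a DAG with vertices $(i,j)$, meaning $i$ steps taken of which $j$ are failures-avoided, but we keep it simple as follows. In $G-F_S$, the unique shortest $s$-$t$ path $P_S$ uses $e_i$ for $i\notin S$ and the detour for $i\in S$. We choose weights so that $P_S$ has at most $L$ hops: detour edges count as a single hop (one heavy edge $v_{i-1}\to v_i$ parallel to $e_i$), while each light $e_i$ is made a 1-hop edge and the total is $L+f$. To fix the hop budget we instead use only $L$ positions for light edges and let the failures be placed among $L+f$ ``slots''. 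The intended combinatorics is that $P_S$ is determined by $S$, contains a specific set $A_S$ of about $L$ edges, and must avoid the $f$ edges $F_S$, where $A_S\cup F_S$ partition a fixed ground set $U$ of size $L+f$ with $|F_S|=f$. Uniqueness of shortest paths is guaranteed by generic weights, and all $P_S$ have at most $L$ edges.

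\textbf{Counting.} A subgraph $H\in\G_{F_S}$ that serves $(s,t,F_S)$ must contain all of $A_S=U\setminus F_S$ and none of $F_S$. Hence $H\cap U=A_S$ exactly, so each subgraph serves at most one set $S$. There are $\binom{L+f}{f}$ choices of $S$, and therefore $|\G|\ge\binom{L+f}{f}$. Stirling's formula gives
\[
\binom{L+f}{f}=\Theta\Bigl(\sqrt{\tfrac{L+f}{Lf}}\Bigr)\frac{(L+f)^{L+f}}{L^Lf^f}\ \ge\ \Omega\Bigl(\frac{1}{\sqrt f}\Bigr)\frac{(L+f)^{L+f}}{L^Lf^f},
\]
which is exactly the claimed bound. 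The hypotheses $f=O(L)$ and $L\ge2$ are used only to keep the Stirling constants uniform and the gadget well-defined.

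\textbf{Main obstacle.} The delicate step is designing the gadget so that for every $S$ the replacement path really uses all edges of $U\setminus F_S$ and only $L$ hops. The natural fix is a two-row grid DAG. It has rows $0$ and $1$ and columns $0,\dots,L+f$, with horizontal weight-$1$ edges $e_i$ in row $0$ and bypass edges whose use is forced only where $e_i$ fails. The weights are tuned so that each failure costs one extra unit, and the hop count of the replacement path equals (number of surviving light edges used) $=L$ plus a constant. I would first try placing each bypass as a single edge from $v_{i-1}$ to $v_i$ of weight $1+\varepsilon_i$ with distinct small $\varepsilon_i$, and remove $f$ hops by merging. If the hop count then exceeds $L$, I would rescale by taking $L'=L-O(1)$ light positions, which changes the bound only by constant factors under $f=O(L)$.
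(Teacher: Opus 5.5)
\textbf{There is a genuine gap.} The graph, which is the entire content of this lower bound, is never actually constructed, and the structure your counting step relies on cannot exist.

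Your injectivity argument needs a fixed ground set $U$ of $L+f$ edges such that, for every $f$-subset $F_S \subseteq U$, the replacement path $P_S$ contains all of $U\setminus F_S$. Since $|E(P_S)|\le L=|U\setminus F_S|$, this forces $E(P_S)=U\setminus F_S$. So every $L$-subset of $U$ would have to be the edge set of an $s$-$t$ path. That is impossible for $L\ge 2$ and $f\ge 1$. Each such subset must contain an edge leaving $s$, so $U$ needs at least $f+1$ edges leaving $s$ (otherwise put them all into $F_S$). But then some $L$-subset contains two of them, which no path does.

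Your concrete chain-with-bypasses gadget does give an injective family. However, its paths use one edge per position, so the hop budget allows only $L$ positions and hence only $\binom{L}{f}$ instances. For $f\le L$ this falls short of $\binom{L+f}{f}$ by a factor of at least $(1+f/L)^f$, which is exponential in $f$ for $f=\Theta(L)$. The deficit is $f$ hops, not $O(1)$, so neither the unspecified ``merging'' nor rescaling to $L-O(1)$ positions repairs it.

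\textbf{The missing idea.} You do not need complementary pairs at all. For injectivity it suffices that, for any two instances $(P,F_P)\neq(Q,F_Q)$, some edge of one path lies in the other's failure set. The paper realizes this with a layered DAG:
\begin{itemize}
\item there are $L-1$ inner layers with $f+1$ levels each;
\item edges go only from level $j$ to levels $k\ge j$ of the next layer;
\item the edges entering $a_{i,j}$ have weight $j\,(f+1)^{L-1-i}$, so a path's weight is the base-$(f+1)$ value of its nondecreasing level sequence, and weight order coincides with lexicographic order.
\end{itemize}
For a path $P$, the failure set $F_P$ consists of the out-edges at $P$'s vertices that lead to levels strictly below the level $P$ moves to next. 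Three things then follow:
\begin{itemize}
\item the size of $F_P$ telescopes to $j_{L-1}\le f$;
\item every path surviving in $G-F_P$ is lexicographically larger than $P$, so $P$ is the unique shortest one;
\item for $P\prec Q$, the edge on which $P$ departs from $Q$ lies in $F_Q$.
\end{itemize}
This is the ``merging'' you were after: one hop may climb several levels. Hence each of the $\binom{L+f-1}{f}$ monotone lattice paths with $L-1$ horizontal and $f$ vertical unit steps is traced by an $s$-$t$ path with only $L$ edges. Your Stirling computation then finishes the argument as in the paper. There, $f=O(L)$ is used to replace $\binom{L+f-1}{f}=\frac{L}{L+f}\binom{L+f}{f}$ by $\Theta\bigl(\binom{L+f}{f}\bigr)$.
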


Our results determine the covering value up to an $O(f^{3/2}\log n)$-factor.
As mentioned above, one $\Otilde(f)$-factor stems from probability boosting.
We expect that it can be removed in a careful derandomization.
The remainder is due to the fact that our $(L,f)$-replacement path covering consists of
$O\big(\frac{(L+f)^{L+f}}{L^L  f^f} \big) = O\big(\sqrt{f} \nwspace \binom{L+f}{f} \big)$ subgraphs (in expectation).
The lower bound, in turn, provides a graph with  $\Omega\big(\binom{L+f}{f}\big)$
pairwise conflicting failure sets.
Taken together, our upper and lower bounds essentially determine the covering value of $(L,f)$-replacement path coverings throughout the natural regime $f = O(L)$. The remaining gap is only polynomial in $f$ and logarithmic in $n$.
Another interesting question is to improve the query time
while retaining a \mbox{(near-)optimal} number of subgraphs.
\Cref{thm:upper} has the same drawback as \cite{WY13}
that finding the subfamily $\G_F$ amounts to scanning all of $\G$.
The more complicated techniques in
\cite{Bilo25IndexingSubnetworks,Bilo26SimplerRPCs,KarthikParter24DeterministicRPC_TALG}
enable a faster query but suffer from a higher covering value.

\begin{table}[t]
\caption{
Lower bounds on the covering value of $(L,f)$-replacement path coverings.
Symbol $\varepsilon$ stands for an arbitrarily small positive constant.
}
%\vspace*{.5em}
\centering
\setlength{\tabcolsep}{5pt}
\renewcommand{\arraystretch}{1.25}
\begin{tabular}{@{}ccc@{}}

\textbf{Covering Value} & \textbf{~~~~~~Range~~~~~~} & \textbf{Reference} \\
\noalign{\hrule height 1pt}\\[-10pt]

$\Omega\!\left( \left(\frac{L}{f}\right)^f \nwspace \right)$ &  $(L/f)^f \le n$ & \cite{KarthikParter24DeterministicRPC_TALG}\\[.75em]

$\Omega\!\left( \frac{\sqrt{f e^f}}{L} \cdot \left(\frac{L}{f}\right)^f \right)$ & $f \le (\frac{1}{2}{-}\varepsilon) \nwspace L$ & \cite{Bilo26SimplerRPCs}\\[.75em]

$\Omega\left(\frac{1}{\sqrt f}\cdot \frac{(L+f)^{L+f}}{L^L  f^f}\right)$ & $f = O(L)$ & \Cref{thm:lower}
\end{tabular}
\label{table:lower}
\end{table}

%\begin{figure}[!ht]
%\centering
%\vspace{2mm}
%\begin{tikzpicture}
%\begin{axis}[
%width=11cm, height=7cm,
%xlabel={$f$}, ylabel={$\ln(\text{covering value})$},
%domain=1:1000, samples=1000,
%legend pos=north west, legend cell align=left,
%grid=both, grid style={gray!20},
%every axis plot/.append style={very thick},
%]
%\pgfmathsetmacro{\L}{1000}
%
%\addplot[blue]
%{x*(ln(\L) - ln(x))};
%\addlegendentry{\cite{KarthikParter24DeterministicRPC_TALG}}
%
%\addplot[orange]
%{0.5*ln(x) + 0.5*x - ln(\L) + x*ln(\L) - x*ln(x)};
%\addlegendentry{\cite{Bilo26SimplerRPCs}}
%
%\addplot[red]
%{(\L+x)*ln(\L+x) - \L*ln(\L) - x*ln(x) - 0.5*ln(x)};
%\addlegendentry{New}
%
%\end{axis}
%\end{tikzpicture}
%\vspace{-3mm}
%\caption{Comparison of the three covering-value lower bounds on log scale (for $L=1000$). The new lower bound (in red) dominates prior work across the range $0<f\leq L$.}
%\label{figure:plot}
%\end{figure}

The paper is organized as follows:~\Cref{sec:upper} shows the upper bound proof of~\Cref{thm:upper};~\Cref{sec:lower} provides the lower bound construction of~\Cref{thm:lower}, and, finally, we compare the lower bound  with previous lower bounds in~\Cref{subsec:lower_comparison}.

\section{Upper Bound}
\label{sec:upper}

In this section, we briefly present the upper bound.
It makes explicit ideas that are already contained in \cite[Lemma~10]{Bilo22FixedParameterSensitivityOracles}
in the context of the $k$-path problem.
We provide a complete proof here for self-containment and readability.

\upperbound*

\begin{proof}
Let $G = (V,E)$ be the input graph and consider a spanning subgraph $G' = (V,E')$
that is obtained from $G$ by removing any edge in $E$ independently with probability $p$.
Let $s,t \in V$ be two vertices and $F \subseteq E$ a set of at most $f$ failing edges
such that there exists a shortest path $P$ from $s$ to $t$ in $G{-}F$ such that
$P$ has at most $L$ edges. Fix such a replacement path $P$.

We say $G'$ \emph{covers} $P$ if the subgraph contains none 
of the failing edges, i.e., $F \cap E' = \emptyset$, but all edges of the path, $E(P) \subseteq E'$.
This happens with probability $p^{|F|} (1-p)^{|E(P)|} \ge p^f (1-p)^L$.
The lower bound is maximum for $p = f/(f{+}L)$.
This results in a probability of at least
$q = p^f (1-p)^L = \frac{f^f L^L}{(f+L)^{f+L}}$
for any fixed hop-short replacement path to be covered.

By standard Chernoff bounds, see e.g.~\cite[p.~70]{Motwani95RandomizedAlgorithms},
sampling $O( f \log(n)/q)$ independent subgraphs $G'$ ensures
that a path is covered by at least one subgraph with probability $1-n^{\Omega(f)}$.
The number of relevant replacement paths is at most the number of triples
$(s,t,F)$, namely $|V|^2 \cdot \binom{|E|}{\le f} = O(n^2 \nwspace m^f) = O(n^{2f+2})$.
A union bound over all replacement paths shows that they are covered
with high probability.
The number of subgraphs is $O\left(f \log(n) \cdot \frac{(L+f)^{L+f}}{f^f L^L}\right)$.
\end{proof}

\section{Lower Bound}
\label{sec:lower}

In this section we show the lower bound stated in~\Cref{thm:lower}.
Throughout the remainder of the paper, assume that the cut-off value $L$ is at least 2.
The main part of the lower bound is to prove that any \mbox{$(L,f)$-replacement} path covering
requires at least $\binom{L+f-1}{f}$ subgraphs. 
We construct a directed graph $G$ with $L+1$ layers.
Layer $0$ only contains the source vertex $s$, and layer $L$ only contains the target $t$.
Each internal layer $i \in [L-1]$ has $f+1$ vertices $a_{i,0}, a_{i,1}, \dots, a_{i,f}$.
The assumption $L\ge 2$ ensures that there is at least one inner layer.
Edges exist only between neighboring layers.
\begin{itemize}
\item $s$ has an edge to every vertex of layer $1$;
\item every vertex $a_{L-1,j}$ in layer $L-1$ has an edge to $t$;
\item for $i \in [L-2]$, vertex $a_{i,j}$ has an edge to $a_{i+1,k}$ whenever $j \le k$.
\end{itemize}
Consider vertex $a_{i,j}$ in the $i$-th layer at level $j$.
We define the weight of each of its in-edges as $j \cdot (f+1)^{L-1-i}$.
The weight of the edges $(a_{{L-1},j},t)$ is irrelevant,
we set it to $0$ for simplicity.

\begin{figure}[!ht]
\centering
\includegraphics[width=0.58\linewidth]{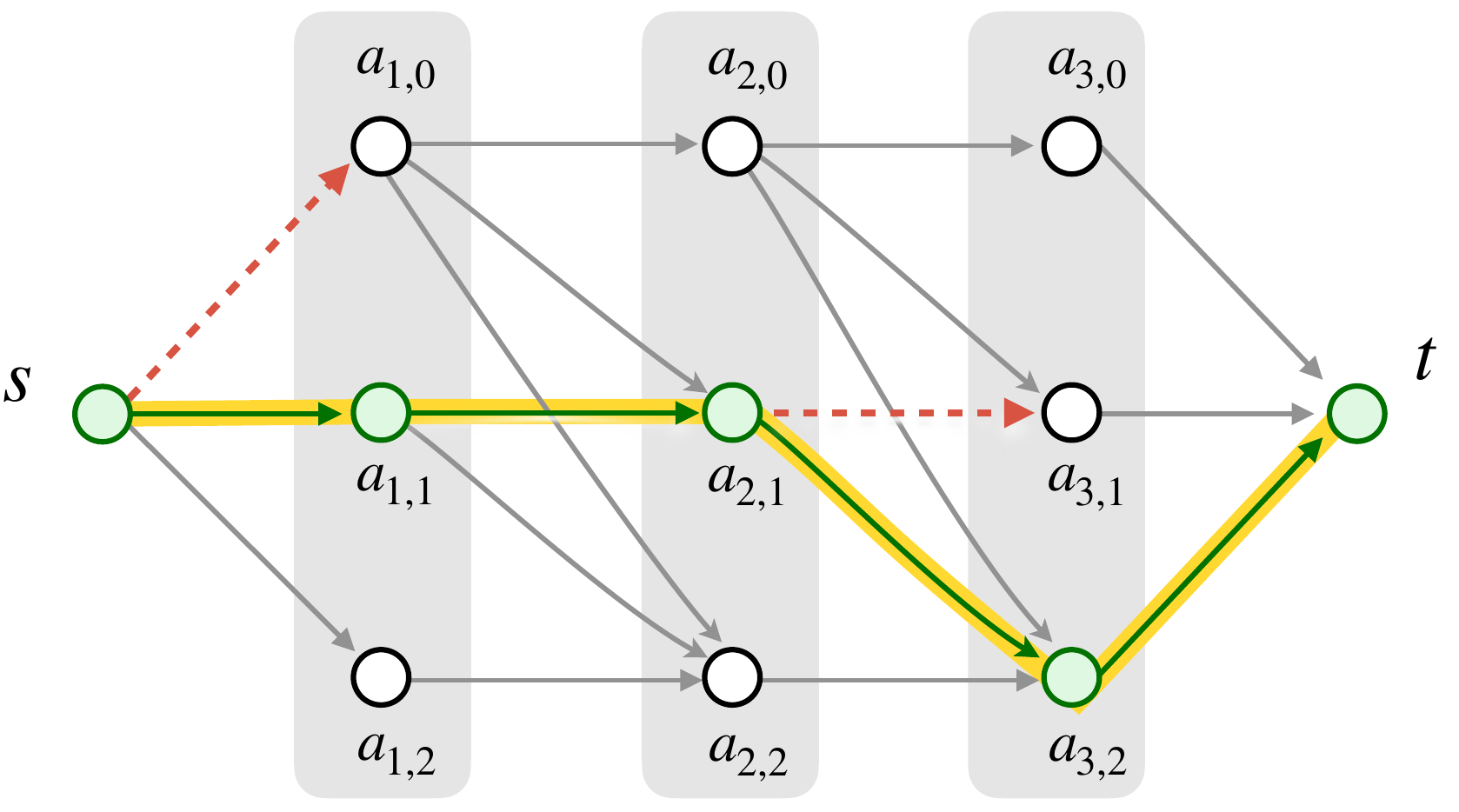}
\caption{The lower bound construction for paths of length $L=4$ and $f=2$ failures.}
\label{figure:lb}
\end{figure}

Note that every $s$-$t$ path in the graph $G$ uses exactly one vertex per layer, and thus consists of $L$ edges.
See \Cref{figure:lb} for an example.
Let $\mathcal{P}$ denote the set of all $s$-$t$ paths.
Any such path $P = (s, a_{1,j_1}, a_{2,j_2} \dots  a_{L-1,j_{L-1}},t) \in \mathcal{P}$ is uniquely defined by its non-decreasing level sequence $\ell(P) = (j_1, j_2, \dots, j_{L-1})$
where we have $j_i \in \{0, 1, \dots, f\}$ and $j_{i} \le j_{i+1}$ for each $i$.
When interpreting the entries of $\ell(P)$ as the digits of a base-$(f{+}1)$ integer,
the value of this integer is precisely the total weight of the path
%\begin{equation*}
	$\operatorname{wt}(P) = \sum_{i=1}^{L-1} j_i \cdot (f{+}1)^{L-1-i}$.
%\end{equation*}

\subsection{Covering $s$-$t$-Replacement Paths}
\label{subsec:lower_covering}

Let $\prec$ denote the (strict) lexicographic order on $\mathcal{P}$ based on these level sequences.
In more detail, for any two distinct paths $P \neq Q$ with level sequences
$\ell(P) = (j_1, \dots, j_{L-1})$ and $\ell(Q) = (k_1, \dots, k_{L-1})$,
let $i^*$ be the earliest layer in which $P$ and $Q$ take different vertices.
Then, $P \prec Q$ holds if and only if $j_{i^*} < k_{i^*}$.

\begin{lemma}
\label{lemma:weight-order}
	The total weight of a path is strictly increasing with respect to the lexicographical order. 
	That means, for distinct paths $P \neq Q \in \mathcal{P}$,
	$P \prec Q$ implies $\operatorname{wt}(P) < \operatorname{wt}(Q)$.
\end{lemma}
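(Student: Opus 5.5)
We will prove this as the usual statement that base-$(f{+}1)$ representations with digits in $\{0,\dots,f\}$ respect lexicographic order. Fix distinct paths $P \prec Q$ with level sequences $\ell(P)=(j_1,\dots,j_{L-1})$ and $\ell(Q)=(k_1,\dots,k_{L-1})$. Let $i^*$ be the first index where they differ, so $j_i = k_i$ for $i<i^*$ and $j_{i^*} < k_{i^*}$; since the levels are integers, $k_{i^*} - j_{i^*} \ge 1$. We use the weight formula $\operatorname{wt}(P)=\sum_{i=1}^{L-1} j_i (f{+}1)^{L-1-i}$ established above.

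The plan is to write
\[
\operatorname{wt}(Q)-\operatorname{wt}(P) = \sum_{i=i^*}^{L-1} (k_i - j_i)(f{+}1)^{L-1-i}.
\]
The terms with $i<i^*$ cancel. We then split off the leading term $i=i^*$, which contributes at least $(f{+}1)^{L-1-i^*}$. For the tail $i>i^*$, each coefficient satisfies $k_i - j_i \ge -f$ because all levels lie in $\{0,\dots,f\}$. Hence the tail is at least
\[
-f\sum_{i=i^*+1}^{L-1}(f{+}1)^{L-1-i} = -f\cdot\frac{(f{+}1)^{L-1-i^*}-1}{f} = -\big((f{+}1)^{L-1-i^*}-1\big),
\]
by the geometric series. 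Adding the leading term gives $\operatorname{wt}(Q)-\operatorname{wt}(P)\ge 1>0$.

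There is no real obstacle here. The only point needing care is the geometric-sum bound, together with the edge case $i^*=L-1$, where the tail is empty and the difference is simply $k_{L-1}-j_{L-1}\ge 1$. The monotonicity constraint $j_i\le j_{i+1}$ is not needed. The argument only uses that the digits lie in $\{0,\dots,f\}$, so it holds for arbitrary digit strings.
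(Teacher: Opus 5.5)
Your proposal is correct and follows essentially the same argument as the paper. You cancel the common prefix, bound the leading term below by $(f+1)^{L-1-i^*}$, bound the tail below by $-f$ times a geometric series that sums to $(f+1)^{L-1-i^*}-1$, and conclude $\operatorname{wt}(Q)-\operatorname{wt}(P)\ge 1$. Your remarks on the empty-tail case $i^*=L-1$ and on not needing monotonicity are accurate.
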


\begin{proof}
It suffices to prove that $\operatorname{wt}(Q) - \operatorname{wt}(P) \ge 1$.
Let $i^*$ be the first layer where the paths $P$ and $Q$ differ.
We thus have $j_i = k_i$ for all $i < i^*$, and $j_{i^*} < k_{i^*}$.
The difference in their total weights is
\begin{align*}
 	\operatorname{wt}(Q) - \operatorname{wt}(P) &= \sum_{i=1}^{L-1} (k_i - j_i) \cdot (f+1)^{L-1-i}
 		 = \sum_{i=i^*}^{L-1} (k_i - j_i) \cdot (f+1)^{L-1-i}\\
 		&= (k_{i^*} - j_{i^*}) \cdot (f+1)^{L-1-{i^*}} + \sum_{i=i^*+1}^{L-1} (k_i - j_i) \cdot (f+1)^{L-1-i}
\end{align*}

Since $j_{i^*}, k_{i^*}$ are integers with $j_{i^*} < k_{i^*}$, the leading term satisfies 
$(k_{i^*} - j_{i^*}) (f+1)^{L-1-{i^*}} \ge (f+1)^{L-1-{i^*}}$.
For every $i \ge i^*+1$ both levels lie in $\{0, 1, \dots, f\}$, 
hence $k_i - j_i \ge -f$.
Summing these bounds gives
\begin{align*}
\operatorname{wt}(Q) - \operatorname{wt}(P) &\ge (f+1)^{L-1-{i^*}} - \sum_{i=i^*+1}^{L-1} f \cdot (f+1)^{L-1-i} \\
	&= (f+1)^{L-1-{i^*}} - f \cdot \sum_{w=0}^{L-2-{i^*}} (f+1)^{w}\\
	&= (f+1)^{L-1-{i^*}} - \left( (f+1)^{L-1-{i^*}} - 1 \right)
	 = 1.
\end{align*}
\end{proof}
 
%\paragraph{The failure set of a path.}
Fix an $s$-$t$ path $P$ with level sequence $\ell(P) = (j_1,\dots,j_{L-1})$. 
We define a set of failing edges $F_P$ associated with this path. Intuitively, $F_P$ removes exactly those edges that would allow a competing path to become lexicographically smaller.
The set contains, at each vertex visited by $P$, 
every edge that goes to a \emph{lower} level than the one $P$ takes next.

\begin{equation*}
	F_P = \left\{ (s, a_{1,k}) \mid 0 \le k < j_1 \right\} \cup \left\{ (a_{i,  j_{i}} , a_{i+1,  k})  \mid i \in [L-2],\ \ j_{i} \le k < j_{i+1}   \right\}.
\end{equation*}
None of these edges lie on the path because $P$ always moves to level $j_{i+1}$,
not to anything below it. 
Also, by construction of the graph $G$,
edges between layers $i$ and $i+1$ only connect vertices in the same or higher levels.
Therefore, if $P$ does not change levels ($j_i = j_{i+1}$),
no edges between those layers are deleted by $F_P$.
At layer $i$, the failure set contains exactly $j_{i+1} - j_{i}$ of the out-edges.
Hence, $|F_P| = j_1 + \sum_{i=1}^{L-2} (j_{i+1} - j_{i}) = j_{L-1}   \le   f$
satisfies the sensitivity bound.

\begin{lemma}
\label{lemma:unique}
For every $P \in \mathcal{P}$, the path $P$ is the unique shortest $s$-$t$ path in $G{-}F_P$.
\end{lemma}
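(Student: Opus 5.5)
The plan is to combine two facts. First, $P$ survives in $G{-}F_P$. Second, every other $s$-$t$ path $Q \neq P$ that survives in $G{-}F_P$ satisfies $P \prec Q$. \Cref{lemma:weight-order} then gives $\operatorname{wt}(P) < \operatorname{wt}(Q)$ for each such $Q$, so $P$ is the unique shortest $s$-$t$ path. The graph is layered with edges only between consecutive layers, so every $s$-$t$ path in $G{-}F_P$ is a member of $\mathcal{P}$. Moreover, since $\prec$ is a total order on $\mathcal{P}$, it is enough to show that no surviving $Q \neq P$ has $Q \prec P$.

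Survival of $P$ is immediate. We already observed that $F_P$ contains only edges going to a level strictly below $j_{i+1}$, while $P$ itself always uses the edge to level $j_{i+1}$, so no edge of $P$ lies in $F_P$.

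For the main step, take $Q \neq P$ in $G{-}F_P$ with level sequence $(k_1,\dots,k_{L-1})$. Let $i^*$ be the first layer where $P$ and $Q$ differ, and suppose for contradiction that $k_{i^*} < j_{i^*}$.

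\begin{itemize}
\item If $i^* = 1$, then $Q$ uses the edge $(s,a_{1,k_1})$ with $k_1 < j_1$. This edge lies in the first part of $F_P$, a contradiction.
\item If $i^* \ge 2$, then $Q$ agrees with $P$ at layer $i^*-1$, so it leaves the vertex $a_{i^*-1,j_{i^*-1}}$ via the edge to $a_{i^*,k_{i^*}}$. This edge exists in $G$ only if $j_{i^*-1} \le k_{i^*}$. Together with $k_{i^*} < j_{i^*}$, the edge therefore belongs to $F_P$ with $i = i^*-1$, again a contradiction.
\end{itemize}

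Hence $j_{i^*} < k_{i^*}$, i.e.\ $P \prec Q$, and \Cref{lemma:weight-order} completes the argument.

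The only delicate point is the case analysis at the first divergence layer. It relies on the fact that $Q$ occupies exactly the same vertex as $P$ just before layer $i^*$, which is what places the offending edge inside $F_P$. Outgoing edges at vertices $P$ does not visit are not deleted, but they never matter, because $Q$ diverges from $P$ only at $i^*$.
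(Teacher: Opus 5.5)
Your proof is correct and follows the paper's argument: you show that $P$ survives, take the first layer $i^*$ where $Q$ diverges, and observe that a move from $a_{i^*-1,j_{i^*-1}}$ (or from $s$) to a level below $j_{i^*}$ is either absent from $G$ or lies in $F_P$. You then conclude $P \prec Q$ and invoke \Cref{lemma:weight-order}. The only difference is cosmetic: you phrase the key step as a contradiction, whereas the paper derives $j_{i^*} \le k_{i^*}$ directly.
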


\begin{proof}
Recall that every $s$-$t$ path has $L$ edges
and that the path $P$ survives in the subgraph $G{-}F_P$. 
We show that it has minimum total weight among all $s$-$t$ paths in $G{-}F_P$.
Let $Q \in \mathcal{P}{\setminus}\{P\}$ be any other path with $E(Q) \cap F_P = \emptyset$.
Let $\ell(P) = (j_1, \dots, j_{L-1})$ and $\ell(Q) = (k_1,\dots,k_{L-1})$
be the respective level sequences and
$i^*$ the first layer in which $Q$ and $P$ differ.
First, assume $i^* > 1$.
The path $Q$ thus arrives at the vertex $a_{i^*-1,j_{i^*-1}}$.
The levels $[0, j_{i^*-1})$ are not reachable from $a_{i^*-1,j_{i^*-1}}$ by the construction of $G$
and the failure set $F_P$ contains every out-edge to a level in $[j_{i^*-1}, j_{i^*})$,
hence $j_{i^*} \le k_{i^*}$.
If $i^* = 1$, then the same conclusion holds simply from $F_P$
containing all edges $(s, a_{1,k})$ for $k < j_1$.
Since $P$ and $Q$ differ in layer $i^*$, we have $j_{i^*} < k_{i^*}$ and thus $P \prec Q$.
\Cref{lemma:weight-order} now implies $\operatorname{wt}(P) < \operatorname{wt}(Q)$.
\end{proof}
 
\begin{lemma}
\label{lemma:count}
	An $(L,f)$-replacement path covering for the graph $G$ must cover 
	different $s$-$t$ paths with different subgraphs.
	Therefore, any RPC has covering value at least $|\mathcal{P}| = \binom{L+f-1}{f}$.
\end{lemma}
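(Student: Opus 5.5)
The argument has two parts: a pigeonhole step showing that no subgraph can serve two different paths, and a count of $|\mathcal{P}|$.

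\emph{Distinct paths need distinct subgraphs.} Let $\G$ be an $(L,f)$-RPC for $G$ and fix $P \in \mathcal{P}$. By the computation preceding \Cref{lemma:unique}, $|F_P| = j_{L-1} \le f$, so $\G_{F_P}$ exists. By \Cref{lemma:unique}, $P$ is the unique shortest $s$-$t$ path in $G{-}F_P$, and it has exactly $L$ edges. Property~2 of \Cref{def:RPC} then gives a subgraph $H_P \in \G_{F_P}$ that contains a shortest $s$-$t$ path of $G{-}F_P$. By uniqueness this path is $P$, so $E(P) \subseteq E(H_P)$. Property~1 gives $E(H_P) \cap F_P = \emptyset$.

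Now suppose $H_P = H_Q = H$ for distinct $P, Q$, and without loss of generality $P \prec Q$. Let $i^*$ be the first layer where they differ. At layer $i^*-1$ both paths sit at the same vertex $a_{i^*-1,j_{i^*-1}}$ (or at $s$ if $i^*=1$). Since $P \prec Q$, we have $j_{i^*} < k_{i^*}$. The edge of $P$ leaving this vertex goes to level $j_{i^*}$, which lies in the range $[j_{i^*-1}, k_{i^*})$ (or $[0,k_1)$ when $i^*=1$). By the definition of $F_Q$, this edge belongs to $F_Q$. But it also lies in $E(P) \subseteq E(H)$, while $E(H) \cap F_Q = \emptyset$, a contradiction. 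Hence $P \mapsto H_P$ is injective and $|\G| \ge |\mathcal{P}|$.

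\emph{Counting $\mathcal{P}$.} Paths in $\mathcal{P}$ are in bijection with non-decreasing sequences $(j_1,\dots,j_{L-1})$ with entries in $\{0,\dots,f\}$. These are multisets of size $L-1$ drawn from $f+1$ values. By stars and bars their number is
\[
\binom{(L-1)+(f+1)-1}{L-1} = \binom{L+f-1}{L-1} = \binom{L+f-1}{f}.
\]

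The only delicate step is checking that $P$'s edge at the branching point indeed lies in $F_Q$. This needs $j_{i^*-1} \le j_{i^*}$, which holds by monotonicity of level sequences, and it needs the case $i^*=1$ to be handled by the first component of the definition of $F_Q$.
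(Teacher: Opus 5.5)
Your proof is correct and follows the paper's argument essentially step for step. Both use the same injective map $P \mapsto H_P \in \G_{F_P}$, derive the same contradiction from the edge of the lexicographically smaller path at the first branching layer lying in $F_Q$, and use the same stars-and-bars count. You go slightly further than the paper by explicitly invoking \Cref{lemma:unique} to conclude that the covered path is $P$ itself, and by verifying $j_{i^*-1} \le j_{i^*} < k_{i^*}$ for membership in $F_Q$; the paper leaves both steps implicit.
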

 
\begin{proof}
The number of $s$-$t$ paths in $\mathcal{P}$ is the same as the number of non-decreasing level sequences $0 \le j_1 \le \dots \le j_{L-1} \le f$.
A straightforward stars-and-bars argument with $L-1$ stars and $f+1$ bars shows
that this number is $\binom{(L-1)+(f+1)-1}{(f+1)-1} = \binom{L+f-1}{f}$.

Let $\G$ be an $(L,f)$-replacement path covering for $G$.
Fix a path $P \in \mathcal{P}$.
It is the $s$-$t$-replacement path for the failure set $F_P$ and has $L$ edges.
Recall from \Cref{def:RPC} that $\G_{F_P} \subseteq \G$ is the subfamily
that is relevant for any query involving $F_P$ (in particular, for $(s,t,F_P)$).
There exists a subgraph $\sigma(P) \in \G_{F_P}$
that contains all edges of $P$ and none of $F_P$. 
This defines a map $\sigma \colon \mathcal{P} \to \G$.
We show that $\sigma$ is injective.
 
Let $Q \in \mathcal{P}{\setminus}\{P\}$ be another $s$-$t$ path.
To reach a contradiction, assume $P$ and $Q$ are covered by the same subgraph in the RPC,
i.e., $\sigma(P) = \sigma(Q)$.
As before, let $i^*$ denote the first layer in which the paths differ.
We can assume $P \prec Q$ without loss of generality, that means, $j_{i^*} < k_{i^*}$. 
The two paths are the same up until vertex $a_{i^*-1,j_{i^*-1}}$ 
(reps.\ the vertex $s$ if $i^* = 1$).
Path $P$ uses the edge $(a_{i^*-1,j_{i^*-1}}, a_{i^*,j_{i^*}})$ 
(resp.\ the edge $(s, a_{1,j_1})$).
As the level $j_{i^*}$ is below $k_{i^*}$,
the edge belongs to the failure set $F_Q$ of the other path.
By the properties of an $(L,f)$-replacement path covering,
$(a_{i^*-1,j_{i^*-1}}, a_{i^*,j_{i^*}})$ is not contained in the edge set $E(\sigma(Q))$. 
On the other hand, the same edge lies on the path $P$ and is therefore in $E(\sigma(P))$,
a contradiction to $\sigma(P) = \sigma(Q)$.
\end{proof}

We now estimate the binomial coefficient via Stirling's approximation.

\begin{lemma}
\label{lemma:lower_directed}
For positive integers $f = O(L)$, it holds that
\begin{equation*}
	\binom{L+f-1}{f} = \Omega\!\left(\frac{1}{\sqrt f}\cdot \frac{(L+f)^{L+f}}{f^f L^L}\right).
\end{equation*} 
\end{lemma}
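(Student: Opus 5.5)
First I would pass from $\binom{L+f-1}{f}$ to the symmetric coefficient $\binom{L+f}{f}$ through the exact identity
\begin{equation*}
	\binom{L+f-1}{f} = \frac{L}{L+f}\binom{L+f}{f}.
\end{equation*}
The assumption $f = O(L)$ enters here: writing $f \le cL$ for a constant $c$, we get $L/(L+f) \ge 1/(1+c)$. So this step loses only a constant factor, and it remains to show $\binom{L+f}{f} = \Omega\big(\frac{1}{\sqrt f}\cdot\frac{(L+f)^{L+f}}{f^f L^L}\big)$.

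For this I would use Stirling's formula in its two-sided explicit form $\sqrt{2\pi k}\,(k/e)^k \le k! \le e\sqrt{k}\,(k/e)^k$, valid for all $k\ge 1$. The lower bound goes in the numerator $(L+f)!$ and the upper bound goes on both factorials $f!$ and $L!$ in the denominator. Both $f\ge1$ and $L\ge1$, so no $0!$ appears. The powers of $e$ cancel because $e^{-(L+f)}/(e^{-f}e^{-L})=1$. This leaves
\begin{equation*}
	\binom{L+f}{f} \ge \frac{\sqrt{2\pi}}{e^2}\sqrt{\frac{L+f}{fL}}\cdot\frac{(L+f)^{L+f}}{f^fL^L}.
\end{equation*}

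Finally, I would bound the square-root factor by $\sqrt{(L+f)/(fL)} \ge \sqrt{L/(fL)} = 1/\sqrt f$. This bound holds for all $f, L$ and does not even need $f=O(L)$. Combining it with the first step gives the claimed $\Omega(\cdot)$ with an explicit constant of roughly $\sqrt{2\pi}/(e^2(1+c))$.

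I do not expect a genuine obstacle. The only points needing care are to use non-asymptotic Stirling bounds, so the estimate is uniform in both $f$ and $L$ rather than only as $f\to\infty$, and to see that the hypothesis $f=O(L)$ is needed exactly to absorb the factor $L/(L+f)$. Without that hypothesis the bound would degrade by a factor $L/f$.
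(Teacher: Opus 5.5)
Your proof is correct and follows the same route as the paper. The identity $\binom{L+f-1}{f}=\frac{L}{L+f}\binom{L+f}{f}$ absorbs the hypothesis $f=O(L)$, and Stirling's formula then yields the prefactor $\sqrt{(L+f)/(fL)}$ times $\frac{(L+f)^{L+f}}{f^f L^L}$. The differences are only refinements: you use explicit non-asymptotic factorial bounds, with constant $\sqrt{2\pi}/e^2$, instead of the $\Theta$-form of Stirling for binomial coefficients; and you note that $\sqrt{(L+f)/(fL)}\ge 1/\sqrt f$ holds unconditionally, whereas the paper uses $f=O(L)$ a second time to call this prefactor $\Theta(1/\sqrt f)$.
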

 
\begin{proof}
It is more convenient to work with $\binom{L+f}{f}$ instead.
This does not change the asymptotics since $\binom{L+f-1}{f} = \frac{L}{L+f} \binom{L+f}{f} = \Omega\big(\binom{L+f}{f} \big)$
The last estimate is due to $f = O(L)$.

Stirling's formula (see e.g.~\cite{Robbins55Stirling}) states
\begin{equation*}
	\binom{N}{k} = \Theta\!\left(\sqrt{\frac{N}{k(N-k)}}\cdot \frac{N^N}{k^k \nwspace (N-k)^{N-k}}\right).
\end{equation*}
Substituting $N = L+f$ and $k = f$ (and thus $N-k = L$) simplifies the pre-factor to
$\sqrt{\frac{L+f}{fL}}$.
The assumption $f = O(L)$ implies that $(L{+}f)/L$ is a constant,
hence the factor is of order $\Theta(1/\sqrt{f} \nwspace)$.
The other term evaluates to $(L{+}f)^{L+f}/(f^f L^L)$.
\end{proof}

This completes the proof of \Cref{thm:lower}.

\subsection{Comparison with Previous Lower Bounds}
\label{subsec:lower_comparison}

We compare our bound with the prior results
$\Omega((L/f)^f\big)$ \cite{KarthikParter24DeterministicRPC_TALG}
and $\Omega((\sqrt{f e^f}/L) \cdot (L/f)^f)$ \cite{Bilo26SimplerRPCs}.
They both share the factor $M = (L/f)^f$, so we first express ours in the same form.

\begin{lemma}
For positive integers $f, L$, it holds that
%\begin{equation*}
	$\frac{(L+f)^{L+f}}{L^L f^f} \ge e^{f} \cdot M$.
%\end{equation*}
\end{lemma}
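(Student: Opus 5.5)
We need $(L+f)^{L+f}/(L^L f^f) \ge e^f (L/f)^f$. The plan is to divide out the common factor and reduce the claim to a standard exponential inequality. Write
\[
\frac{(L+f)^{L+f}}{L^L f^f} = \frac{(L+f)^{f}}{f^f}\cdot\frac{(L+f)^{L}}{L^L} = \left(\frac{L+f}{f}\right)^{f}\left(1+\frac{f}{L}\right)^{L}.
\]
Dividing by $M=(L/f)^f$ turns the first factor into $\left(\frac{L+f}{L}\right)^f = (1+f/L)^f$. The claim is therefore equivalent to
\[
\left(1+\frac{f}{L}\right)^{L+f} \ge e^{f}.
\]

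Set $x = f/L > 0$ and take logarithms. Since $(L+f)\ln(1+x) = L(1+x)\ln(1+x)$ and $f = Lx$, after dividing by $L>0$ it suffices to show
\[
(1+x)\ln(1+x) \ge x \qquad \text{for all } x > 0.
\]
This is the elementary inequality $\ln(1+x) \ge \frac{x}{1+x}$. One can prove it by noting that $g(x) = (1+x)\ln(1+x) - x$ satisfies $g(0)=0$ and $g'(x)=\ln(1+x)\ge 0$. Alternatively, apply $\ln y \ge 1 - 1/y$ with $y = 1+x$.

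There is no real obstacle here. The only point needing care is the algebraic regrouping of the factors so that $M$ cancels cleanly and leaves the single exponent $L+f$. Everything else is the standard one-variable inequality above. If desired, one can also remark that the bound is lossy by a factor of about $(1+f/L)^{L+f}e^{-f}$, which grows when $f$ is comparable to $L$. That would support the comparison in this subsection, but the lemma itself does not need it.
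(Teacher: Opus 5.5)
Your proposal is correct and follows essentially the same route as the paper. Both proofs factor out $M$ to reduce the claim to $\left(1+\tfrac{f}{L}\right)^{L+f}\ge e^f$, pass to logarithms with $x=f/L$, and verify $(1+x)\ln(1+x)\ge x$ by noting that the difference vanishes at $0$ and has derivative $\ln(1+x)\ge 0$.
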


\begin{proof}
Factoring
$\frac{(L+f)^{L+f}}{L^L f^f} = \left(1+\tfrac fL\right)^{L+f}\left(\tfrac Lf\right)^{f}$
reduces the claim to $\left(1+\tfrac{f}{L}\right)^{L+f} \ge e^f$.
Taking the $L$-th root gives $\left(1+\tfrac{f}{L}\right)^{\frac{L+f}{L}} \ge e^{\frac{f}{L}}$,
and by taking the logarithm we arrive at
\begin{equation*}
	\left(1+\tfrac{f}{L}\right) \cdot \ln\!\left(1+\tfrac{f}{L} \right) \ge \frac{f}{L}.
\end{equation*}
To see this, define $t = \frac{f}{L}$ and
$h(t) = (1+t) \ln(1+t) - t$.
We need to show that $h(t)$ is non-negative for $t \ge 0$.
This follows from $h(0) = 0$ and $\frac{\mathrm{d}}{\mathrm{d}t} h(t) = \ln(1+t) \ge 0$. 
\end{proof}

Our bound in \Cref{thm:lower} can thus be written as 
$\Omega( (e^f/\sqrt{f}) \cdot M)$.
Comparing this to $\Omega(M)$ in \cite{KarthikParter24DeterministicRPC_TALG} and
$\Omega((\sqrt{f e^f}/L) \cdot M)$ in \cite{Bilo26SimplerRPCs}. 
The gain over the latter bound is $\Omega(\sqrt{e^f}L/f)$,
which is exponential in $f$ for all $f = O(L)$. 
In particular, our lower bound uniformly subsumes all previously known lower bounds for $f = O(L)$, while remaining within a factor of $O(\sqrt{f})$ of the corresponding upper bound (ignoring logarithmic factors).

\section*{Acknowledgments}

The first author is supported by the project ``SOS-TG: Spanners and Oracles for Static and Temporal Graphs'', funded by Università degli Studi dell'Aquila under the Call for Proposals for Fundamental Research and Early-Career Research Grants -- Year 2026. The second author is supported by the Indian Anusandhan National Research Foundation (ANRF) under the Mathematical Research Impact-Centric Support (MATRICS) scheme, grant agreement No. MTR/2025/001601.
The fourth author is supported by the German Research Foundation (DFG),
grant agreement No.~556899211 ``Design, Analysis, and Engineering of Enumeration Algorithms''.

\bibliographystyle{alphaurl}
\bibliography{ref}

\end{document}